 \newcommand{\suchthat}{\,\big|\,}
 \newcommand{\bline}{\vspace{1\baselineskip}} 
    \newcommand{\splitnode}[4][1]{
      \path
      let \p{w}=(#4.west), \p{e}=(#4.east),
 \n{h}={(\x{e} - \x{w})} in 
node[draw,#2,inner sep=0,above=#1pt,at=(#4.center),anchor=south,semicircle,minimum
width=\n{h}] (#4Deux) {}
node[draw,#3,inner sep=0,above=-#1pt,at=(#4.center),anchor=north,shape border rotate=180,semicircle,minimum
width=\n{h}] (#4Un) {}
;
}
    \newcommand{\splitnodend}[4][1]{
      \path
      let \p{w}=(#4.west), \p{e}=(#4.east),
\n{h}={(\x{e} - \x{w}) - 1pt} in 
node[#2,inner sep=0,above=#1pt,at=(#4.center),anchor=south,semicircle,minimum
width=\n{h}] (#4Deux) {}
node[#3,inner sep=0,above=-#1pt,at=(#4.center),anchor=north,shape border rotate=180,semicircle,minimum
width=\n{h}] (#4Un) {}
;
}
\theoremstyle{plain}
\newtheorem{definition}{Definition}
\newtheorem{theorem}[definition]{Theorem}
\newtheorem{lemma}[definition]{Lemma}
\newtheorem{proposition}[definition]{Proposition}
\theoremstyle{remark}
 \DeclareMathOperator{\splitBlock}{Split}
 \DeclareMathOperator{\remove}{Remove}
 \DeclareMathOperator{\notRel}{NotRel}
 \DeclareMathOperator{\pre}{pre}
 \DeclareMathOperator{\initRefine}{InitRefine}
 \DeclareMathOperator{\Copy}{copy}
\title{Bisimulations over DLTS in O(m.log n)-time}
\author{G\'erard C\'ec\'e}
\affil{\small{}FEMTO-ST, UMR 6174,\\
  1 cours Leprince-Ringuet,  BP 21126,\\
  25201 Montb\'eliard Cedex France\\    
  \texttt{Gerard.Cece@femto-st.fr}}
 \date{}
\begin{document}
\maketitle

\begin{abstract}
The well known Hopcroft's algorithm to minimize deterministic complete
automata runs in $O(kn\log n)$-time, where $k$ is the size of the alphabet
and $n$ the number of states. The main part of this algorithm corresponds
to the computation of a coarsest bisimulation over a finite Deterministic
Labelled Transition System (DLTS). By applying techniques we have developed
in the case of simulations, we design a new algorithm which computes the
coarsest bisimulation over a finite DLTS in $O(m\log n)$-time and
$O(k+m+n)$-space, with $m$ the number of transitions. The underlying
DLTS does not need to be complete and thus: $m\leq kn$. This new
algorithm is much simpler than the two others found in the literature.

\end{abstract}

\section{Introduction}
\label{sec:introduction}

Bisimulation and simulation equivalences are behavioral relations between
processes. They are mainly used to reduce the state space of models since
they preserve branching time temporal logic like CTL and CTL* for
bisimulation \cite{CE81} and their existential fragment for
simulation \cite{GL94}. Simulation can also be used as a sufficient
condition for the inclusion of languages when this test is undecidable in
general \cite{CG11}.

Let us call \emph{coarsest bisimulation problem}, the problem of finding
the coarsest bisimulation equivalence in a Labelled Transition System (LTS)
and \emph{coarsest simulation problem} the corresponding problem for
simulation. One can 
establish a hierarchy, with increasing  
difficulty, from the coarsest bisimulation problem in a finite 
deterministic LTS, to the coarsest simulation problem in a finite non
deterministic 
LTS. The first case was efficiently solved by Hopcroft, to minimize finite
deterministic automata, in \cite{Hop71} with an
algorithm whose complexities are $O(kn\log n)$-time and
$O(kn)$-space \cite{Knu01}, with $k$ the size of the alphabet and $n$ the
number of states.
The next important step, the coarsest bisimulation problem for finite non
deterministic LTS, was partially solved by Paige and Tarjan in
\cite{PT87}. We say partially because it was for $k=1$. The complexities of
their algorithm are  $O(m\log n)$-time and
$O(m+n)$-space with $m$ the number of transitions. An extension to the
general case
$k>1$ was proposed later by Fernandez
\cite{Fern89}. According to \cite{VL08} the time complexity of the algorithm
of Fernandez is $O(kn\log n)$. 

In the present paper, by using techniques we have developed for the coarsest
simulation problem in \cite{Cec12}, we design now a new algorithm to avoid the
$k$ factor in the time complexity of the coarsest bisimulation problem over finite
DLTS. We do this in order to obtain a time complexity of $O(m\log
n)$ and a space complexity of $O(k+n+m)$.

In the literature, there are two papers which 
achieve the same result over finite DLTS, both are designed to minimize
finite deterministic automata. The first one
\cite{VL08}, by Valmari and Lehtinen, may be considered as the best
solution using the ideas of Hopcroft, in the sense that conceptually their
splitters are couples made of a block and of a letter. However, they need, beside the traditional
partition of the states during the computation, a partition of the set of
transitions. The second one, by B\'eal and 
Crochemore \cite{BC08} is closer to our solution but is more complex with
its use of ``all but the largest strategy'' and ``signature of states''.

\section{Preliminaries}
\label{sec:preliminaries}

Let $Q$ be a set of elements. The number of elements of $Q$ is denoted $|Q|$.
A \emph{binary relation} on $Q$ is a subset
of $Q\times Q$.
In the remainder of this paper, we consider only binary
relations, therefore when we write ``relation'' read ``binary relation''. Let
$\mathscr{R}$ be a relation on $Q$.
For $X,Y\subseteq Q$, we note $X\,\mathscr{R}\,Y$
to express the existence of two states $q,q'\in Q$ such that
$(q,q')\in X\times Y\cap\mathscr{R}$. By abuse of notation, we also note 
$q\,\mathscr{R}\,Y$ for $\{q\}\,\mathscr{R}\,Y$,  $X\,\mathscr{R}\,q'$ for
$X\,\mathscr{R}\,\{q'\}$ and $q\,\mathscr{R}\,q'$ for
$\{q\}\mathrel{\mathscr{R}}\{q'\}$. In the figures we draw  $X \mathbin{\tikz[baseline] \draw[dashed,->] (0pt,.8ex) --
  node[font=\scriptsize,fill=white,inner sep=2pt] {$\mathscr{R}$} (8ex,.8ex);}
 Y$ for $X\,\mathscr{R}\,Y$.
We note $\mathscr{R}^{-1}$ the \emph{inverse} of $\mathscr{R}$ such that
$q\mathbin{\mathscr{R}^{-1}}q'$ iff $q'\,\mathscr{R}\,q$.
We define 
$\mathscr{R}(q)\triangleq\{q'\in Q\suchthat q\,\mathscr{R}\,q'\}$ for $q\in Q$ and
$\mathscr{R}(X)\triangleq\cup_{q\in X}\mathscr{R}(q)$ for $X\subseteq Q$.
Let $\mathscr{S}$ be another relation on $Q$, the
\emph{composition} of $\mathscr{R}$ by $\mathscr{S}$ is 
$\mathscr{S}\mathrel{\circ}\mathscr{R}\triangleq\{(x,y)\in Q\times Q\suchthat
y\in\mathscr{S}(\mathscr{R}(x))\}$.
The relation $\mathscr{R}$ is said \emph{reflexive} if for all $q\in Q$, we
have $q\,\mathscr{R}\,q$.
The relation $\mathscr{R}$ is said \emph{symmetric} if
$\mathscr{R}^{-1}=\mathscr{R}$. 
The relation $\mathscr{R}$ is said \emph{transitive} if
$\mathscr{R}\mathrel{\circ}\mathscr{R}\subseteq\mathscr{R}$.
An \emph{equivalence relation} is a reflexive, symmetric and transitive
relation. For an equivalence relation $\mathscr{R}$ on $Q$ and $q\in Q$ we
call $\mathscr{R}(q)$ a \emph{block}, the block of $q$. It is well known,
and immediate consequences of the 
definitions, that for any 
equivalence relation $\mathscr{R}$ and any block $B$ 
of $\mathscr{R}$, we have $\forall q,q'\in B\,.\,q\,\mathscr{R}\,q'$,
and for any $X\subseteq Q$, $\mathscr{R}(X)$ is a, disjoint, union of blocks.

Let $X$ be a
set of subsets of $Q$, we note $\cup X\triangleq \cup_{B\in X}B$.
A \emph{partition} of $Q$ is a set of non empty subsets of $Q$, also 
called \emph{blocks}, that are pairwise disjoint and whose union gives $Q$.
There is a duality between a partition of $Q$ and an equivalence relation
on $Q$ since from a partition $P$ we can derive the equivalence relation
$\mathscr{R}_P=\cup_{B\in P}B\times B$ and from an equivalence relation
$\mathscr{R}$ we can derive a partition
$P_\mathscr{R}=\cup\{\mathscr{R}(q)\subseteq Q\suchthat q\in
  Q\}$. 

Let $T=(Q,\Sigma,\rightarrow)$ be a triple
such that $Q$ is a set of elements called \emph{states}, $\Sigma$ is
an \emph{alphabet}, a set of elements called $letters$ or
\emph{labels}, and 
$\rightarrow\subseteq Q\times\Sigma\times Q$ is a \emph{transition
  relation} or \emph{set of transitions}. Then, $T$ is called a
\emph{Labelled Transition System (LTS)}. From 
$T$, given a letter $a\in\Sigma$, we define the two following relations on $Q$:
$\xrightarrow{a}\triangleq\{(q,q')\in Q\times Q\suchthat (q,a,q')\in
\rightarrow\}$ and its reverse
$\pre_{\xrightarrow{a}}\triangleq{(\xrightarrow{a})}^{-1}$. When
$\rightarrow$ is clear from 
the context, we simply note $\pre_a$ instead of
$\pre_{\xrightarrow{a}}$.
Finally, if for all $a\in\Sigma$, the relation $\xrightarrow{a}$ is a function
(i.e. $\forall a\in\Sigma\,\forall q\in Q\,.\,|{\xrightarrow{a}}(q)|\leq 1$)
then $T$ is said \emph{deterministic} and thus a DLTS.

\section{Underlying Theory}
\label{sec:underlyingTh}

The presentation of this section is quite different from what is found in
the literature. Thanks to the results it provides, I hope the reader will
be convinced of its pertinence.

Let $T=(Q,\Sigma,\rightarrow)$ be a LTS. The classical definition of a
simulation says that a relation $\mathscr{S}\subseteq Q\times Q$ is a
simulation over $T$ if for any 
transition $q_1\xrightarrow{a}q'_1$ and any state $q_2\in Q$ such that
$q_1\,\mathscr{S}\,q_2 $, there is a transition $q_2\xrightarrow{a}q'_2$
such that $q'_1\,\mathscr{S}\,q'_2$. However, the following definition
happens to be more effective than the classical one. As put in evidence by
the picture on the right, the two definitions are clearly equivalent.

\vspace{3ex}
\noindent
\begin{minipage}[c]{0.75\linewidth}
  \begin{definition}
    \label{def:sim}
    Let $T=(Q,\Sigma,\rightarrow)$ be a LTS and $\mathscr{S}$ be a relation
    on $Q$. The relation $\mathscr{S}$ is a \emph{simulation} over $T$ if:
    \begin{displaymath}
     \forall
    a\in\Sigma\;.\;\mathscr{S}\mathrel{\circ}\pre_a\subseteq\pre_a\mathrel{\circ}\mathscr{S}
    \enspace .
  \end{displaymath}
    For two states $q,q'\in Q$, we say ``$q$ is simulated by $q'$'' if there is a
    simulation $\mathscr{S}$ over $T$ such that $q\,\mathscr{S}\,q'$.
      \end{definition}
\end{minipage}
\hfill
\begin{minipage}[c]{0.17\linewidth}
      \begin{tikzpicture}[baseline=(q3.south),shorten >=2pt,shorten
      <=2pt,font=\footnotesize]

      \path coordinate[label=below:$q'_1$] (q1) [fill] circle (1pt)
      (-2,0) coordinate[label=below:$q_1$] (q2)
      [fill] circle (1pt) (-2,1.5) coordinate[label=above:$q_2$] (q3) [fill] circle (1pt)
      (0,1.5) coordinate[label=above:$q'_2$] (q4) [fill] circle (1pt)
      (q3) 
      +(-45:.8) coordinate (q6) [fill] circle (1pt)
      ;
    
      \path[every edge/.style={->,dashed,draw},circle,inner sep=1pt, every node/.style={fill=white}]
      (q2) edge node (S) {$\mathscr{S}$} (q3) (q1) edge node
      {$\mathscr{S}$} (q4);

      \path[->,auto,circle,inner sep=1pt,thick] (q2) edge node {$a$} (q1)
      (q3)edge[shorten >=6pt] node (base) {$a$} (q4)
      (q6)edge[shorten >=3pt] node[near start] {$a$} (q4) ;
    \end{tikzpicture}
\end{minipage}
\bline

 A \emph{bisimulation} $\mathscr{S}$ is just a simulation such that $\mathscr{S}^{-1}$
is also a simulation.
For a bisimulation $\mathscr{S}$, two states $q$ and
$q'$ such that $q\,\mathscr{S}\,q'$ are said \emph{bisimilar}.

The main idea to obtain efficient algorithms is to consider relations
between blocks of states and not merely relations between
states. Therefore, we need a characterization of the notion of bisimulation
expressed over blocks.

\begin{proposition}
  \label{prop:blockBis}
  Let $T=(Q,\Sigma,\rightarrow)$ be a LTS and
  $\mathscr{S}$ be an equivalence relation on $Q$. The relation
  $\mathscr{S}$ is a bisimulation over $T$ if and only if for all block $B$ of
  $\mathscr{S}$ we have: $\forall a\in\Sigma\;.\;
  \mathscr{S}\circ\pre_a(B)\subseteq\pre_a(B)$. 
\end{proposition}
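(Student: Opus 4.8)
The plan is first to remove the ``bi'' from bisimulation. Since $\mathscr{S}$ is assumed to be an equivalence relation it is symmetric, so $\mathscr{S}^{-1}=\mathscr{S}$; hence $\mathscr{S}$ is a bisimulation over $T$ exactly when $\mathscr{S}$ is itself a simulation, that is, by Definition~\ref{def:sim}, exactly when $\mathscr{S}\circ\pre_a\subseteq\pre_a\circ\mathscr{S}$ for every $a\in\Sigma$. So the content to prove is that, for a fixed letter $a$, the relational inclusion $\mathscr{S}\circ\pre_a\subseteq\pre_a\circ\mathscr{S}$ holds if and only if $\mathscr{S}\circ\pre_a(B)\subseteq\pre_a(B)$ for every block $B$ of $\mathscr{S}$. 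I would start by unfolding the definition of composition to record two facts used throughout: $\mathscr{S}\circ\pre_a(B)=\mathscr{S}(\pre_a(B))$, and the inclusion $\mathscr{S}\circ\pre_a\subseteq\pre_a\circ\mathscr{S}$ reads pointwise as ``whenever $q\xrightarrow{a}q'$ and $q\,\mathscr{S}\,p$ there is $p'$ with $p\xrightarrow{a}p'$ and $q'\,\mathscr{S}\,p'$'', which is exactly the classical reading recalled just before Definition~\ref{def:sim}. The single structural property of equivalence relations I will invoke, in both directions, is that each block is $\mathscr{S}$-closed: if $b\in B$ and $b\,\mathscr{S}\,b'$ then $b'\in B$, since $B=\mathscr{S}(b)$.

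For the direction ``simulation $\Rightarrow$ block condition'', fix a block $B$ and a letter $a$, and take $p\in\mathscr{S}(\pre_a(B))$. Then there are $q\in\pre_a(B)$ with $q\,\mathscr{S}\,p$ and $b\in B$ with $q\xrightarrow{a}b$. Applying the simulation property to $q\xrightarrow{a}b$ and $q\,\mathscr{S}\,p$ yields some $b'$ with $p\xrightarrow{a}b'$ and $b\,\mathscr{S}\,b'$; since $B$ is $\mathscr{S}$-closed we get $b'\in B$, whence $p\in\pre_a(B)$. Thus $\mathscr{S}(\pre_a(B))\subseteq\pre_a(B)$.

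For the converse, assume the block condition and fix $a$; I show $\mathscr{S}\circ\pre_a\subseteq\pre_a\circ\mathscr{S}$. Suppose $q\xrightarrow{a}q'$ and $q\,\mathscr{S}\,p$, and let $B$ be the block of $q'$. Then $q\in\pre_a(B)$, so $p\in\mathscr{S}(\pre_a(B))\subseteq\pre_a(B)$ by hypothesis, giving some $p'\in B$ with $p\xrightarrow{a}p'$; and $p'\in B=\mathscr{S}(q')$ yields $q'\,\mathscr{S}\,p'$. This is exactly the pointwise form of $\mathscr{S}\circ\pre_a\subseteq\pre_a\circ\mathscr{S}$, so we are done.

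None of the steps is deep. The only thing to be careful about --- and what I would regard as the main obstacle --- is the bookkeeping when translating between the relational inclusions $\mathscr{S}\circ\pre_a\subseteq\pre_a\circ\mathscr{S}$ and the set/pointwise statements about $\pre_a(B)$ and $\mathscr{S}(\pre_a(B))$, together with making sure the full equivalence hypothesis (not merely reflexivity) is genuinely used: it enters precisely through the $\mathscr{S}$-closedness of blocks, which is what lets me pass from $b\,\mathscr{S}\,b'$ to $b'\in B$ in the first direction and from $p'\in\mathscr{S}(q')$ to $q'\,\mathscr{S}\,p'$ in the second.
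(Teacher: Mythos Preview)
Your proof is correct and follows essentially the same approach as the paper's: reduce bisimulation to simulation via symmetry, then use that blocks are $\mathscr{S}$-closed (i.e., $\mathscr{S}(B)=B$) for the forward direction and reflexivity for the converse. The only cosmetic difference is presentation: the paper argues at the set level (writing $\mathscr{S}\circ\pre_a(B)\subseteq\pre_a(\mathscr{S}(B))=\pre_a(B)$ directly, and in the converse taking $B=\mathscr{S}(q)$ so that $\mathscr{S}\circ\pre_a(q)\subseteq\mathscr{S}\circ\pre_a(B)\subseteq\pre_a(B)=\pre_a\circ\mathscr{S}(q)$), whereas you unfold everything pointwise and, in the converse, take $B$ to be the block of the target $q'$ rather than the source $q$.
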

\begin{proof}
  
  If $\mathscr{S}$ is a bisimulation, by definition it is a
  simulation. Then, we have for any $B\subseteq Q$: $\forall
  a\in\Sigma\;.\;\mathscr{S}\mathrel{\circ}\pre_a(B)\subseteq\pre_a(\mathscr{S}(B))$. This  
  inclusion is thus also true for a block $B=\mathscr{S}(q)$ for a
  given $q\in Q$.  With the transitivity of an equivalence relation, we get
  $\mathscr{S}(B)=\mathscr{S}\circ\mathscr{S}(q)=\mathscr{S}(q)=B$. All
  of this implies:
  $\forall a\in\Sigma\;.\;\mathscr{S}\circ\pre_a(B)\subseteq \pre_a(B)$.
  
  In the other direction, let us consider a given $q\in Q$. Let
  $B=\mathscr{S}(q)$. An equivalence relation is
  reflexive. Therefore, we get: 
  $q\in B$. Then,
  \begin{displaymath}
     \mathscr{S}\mathrel{\circ}\pre_a(q)\subseteq
     \mathscr{S}\mathrel{\circ}\pre_a(B)\subseteq
     \pre_a(B)=
      \pre_a\mathrel{\circ}\mathscr{S}(q)\enspace .
  \end{displaymath}
Which implies that $\mathscr{S}$ is a simulation. Since $\mathscr{S}$ is an
equivalence relation we have $\mathscr{S}=\mathscr{S}^{-1}$ and thus
$\mathscr{S}^{-1}$ is also a simulation. This ends the proof. 
\end{proof}

 Let $\mathscr{R}$ be an
 equivalence relation on $Q$. Thanks to the preceding proposition, if
 $\mathscr{R}$ is not a bisimulation over $T$, there is $a\in\Sigma$ and $B$ a block
 of $\mathscr{R}$ such that
 $\mathscr{R}\circ\pre_a(B)\not\subseteq\pre_a(B)$. This implies the 
existence of a subset of $Q$, let us call it $\remove_a(B)$, such that: 
\begin{equation}
  \label{eq:withRemove}
\mathscr{R}\circ\pre_a(B)\subseteq\pre_a(B)\cup\remove_a(B)\enspace .
\end{equation}
The problem with \eqref{eq:withRemove} is that
$\remove_a(B)$ depends on a letter which we want to avoid in order to
obtain an algorithm whose complexity does not depend on the size of the
alphabet. It would therefore be more interesting to have something like: 
\begin{equation}
  \label{eq:withNotRel}
\mathscr{R}\circ\pre_a(B)\subseteq\pre_a(B\cup\notRel(B))\enspace .
\end{equation}
But, this is possible only if $\mathscr{R}\circ\pre_a(B)\subseteq\pre_a(Q)$. A
sufficient condition is: 
\begin{equation}
 \label{eq:InitRefineRestriction}  
 \forall a\in\Sigma\;.\;\mathscr{R}\circ\pre_a(Q)\subseteq\pre_a(Q)\enspace .
\end{equation}

The fact is that \eqref{eq:InitRefineRestriction} is not a real
restriction because any bisimulation included in $\mathscr{R}$ satisfies
this condition: a state which has an outgoing transition labelled by a
letter $a$ can be bisimilar only with states which have at least one
outgoing transition labelled by $a$. This is exactly what
\eqref{eq:InitRefineRestriction} says about $\mathscr{R}$. Surprisingly,
all of this is well known, but we have found no algorithm that uses this
as an optimization during an initialization phase. For the present paper
this not an optimization, this is a necessity. 

The next definition and lemma establish that we can restrict
our problem of finding the coarsest bisimulation inside an equivalence
relation $\mathscr{R}$ to the problem of finding the coarsest bisimulation
inside an equivalence relation $\mathscr{R}$ that satisfies
\eqref{eq:InitRefineRestriction}.
\begin{definition}
    Let $T=(Q,\Sigma, \rightarrow)$ be a LTS and 
   $\mathscr{R}$ be an equivalence relation on $Q$. We
   define $\initRefine(\mathscr{R})\subseteq \mathscr{R}$ such that:
   \begin{displaymath}
    (q,q')\in\initRefine(\mathscr{R}) \Leftrightarrow
    (q,q')\in\mathscr{R}\,\wedge\,
    \forall a\in\Sigma\;(q\in\pre_a(Q) \Leftrightarrow q'\in\pre_a(Q))\enspace .
   \end{displaymath}
\end{definition}

\begin{lemma}
  \label{lem:InitRefine}
  Let $T=(Q,\Sigma, \rightarrow)$ be a LTS and
  $\mathscr{U} = \initRefine(\mathscr{R})$ with
  $\mathscr{R}$ an equivalence relation on $Q$. Then, $\mathscr{U}$ is an
  equivalence relation on $Q$ and all bisimulation over 
  $T$ included in $\mathscr{R}$ is included in $\mathscr{U}.$
\end{lemma}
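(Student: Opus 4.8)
The plan is to treat the two assertions separately, and both reduce to unwinding definitions.

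For the claim that $\mathscr{U}=\initRefine(\mathscr{R})$ is an equivalence relation, I would check reflexivity, symmetry and transitivity directly against the definition. Reflexivity: for every $q\in Q$ we have $(q,q)\in\mathscr{R}$ since $\mathscr{R}$ is reflexive, and the biconditional $q\in\pre_a(Q)\Leftrightarrow q\in\pre_a(Q)$ holds trivially for every $a\in\Sigma$, so $(q,q)\in\mathscr{U}$. Symmetry: if $(q,q')\in\mathscr{U}$ then $(q,q')\in\mathscr{R}$ gives $(q',q)\in\mathscr{R}$ by symmetry of $\mathscr{R}$, and a biconditional is symmetric in its two sides, so $(q',q)\in\mathscr{U}$. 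Transitivity is the only part needing a word: from $(q,q'),(q',q'')\in\mathscr{U}$ we get $(q,q'')\in\mathscr{R}$ by transitivity of $\mathscr{R}$, and chaining "$q\in\pre_a(Q)\Leftrightarrow q'\in\pre_a(Q)$" with "$q'\in\pre_a(Q)\Leftrightarrow q''\in\pre_a(Q)$" yields "$q\in\pre_a(Q)\Leftrightarrow q''\in\pre_a(Q)$" for every $a$, hence $(q,q'')\in\mathscr{U}$.

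For the second claim, let $\mathscr{S}$ be an arbitrary bisimulation over $T$ with $\mathscr{S}\subseteq\mathscr{R}$, and pick $(q,q')\in\mathscr{S}$. Since $\mathscr{S}\subseteq\mathscr{R}$ we already have $(q,q')\in\mathscr{R}$, so it remains only to establish $q\in\pre_a(Q)\Leftrightarrow q'\in\pre_a(Q)$ for each $a\in\Sigma$. For the forward implication I use the simulation property of $\mathscr{S}$: if $q\in\pre_a(Q)$ then $q$ has an outgoing $a$-transition $q\xrightarrow{a}q_1$, and because $\mathscr{S}$ is a simulation with $q\,\mathscr{S}\,q'$ there is a matching transition $q'\xrightarrow{a}q_1'$, whence $q'\in\pre_a(Q)$. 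For the reverse implication I use that $\mathscr{S}^{-1}$ is also a simulation (this is exactly the definition of bisimulation) together with $q'\,\mathscr{S}^{-1}\,q$: if $q'\in\pre_a(Q)$, an $a$-transition out of $q'$ is matched by one out of $q$, so $q\in\pre_a(Q)$. This gives $(q,q')\in\mathscr{U}$, i.e.\ $\mathscr{S}\subseteq\mathscr{U}$. (Equivalently, one can phrase the whole step in terms of the composition form $\mathscr{S}\circ\pre_a\subseteq\pre_a\circ\mathscr{S}$ of Definition~\ref{def:sim}, but invoking the classical formulation of simulation, which the text notes is equivalent, keeps the argument short.)

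I do not anticipate a genuine obstacle: the statement is essentially bookkeeping. The only point requiring attention is to apply the simulation condition in the correct direction — matching \emph{outgoing} transitions, and using $\mathscr{S}^{-1}$ rather than $\mathscr{S}$ for the converse inclusion — and to keep the quantifier over $a\in\Sigma$ in the right place throughout.
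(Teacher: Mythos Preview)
Your proposal is correct and follows essentially the same route as the paper: verify reflexivity, symmetry and transitivity from the definition, then use the simulation property (in one direction for $\mathscr{S}$, in the other for $\mathscr{S}^{-1}$) to show that any pair in $\mathscr{S}$ satisfies the $\pre_a(Q)$ biconditional. The only cosmetic difference is that the paper phrases transitivity and the second claim by contradiction and uses the compositional form $\mathscr{S}\circ\pre_a\subseteq\pre_a\circ\mathscr{S}$, whereas you argue directly with the classical matching-transition formulation; the content is the same.
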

\begin{proof}\mbox{}
  \begin{itemize}
\item Since $\mathscr{R}$ is an equivalence relation and thus reflexive,
  $\mathscr{U}$ is also trivially reflexive. Its definition being
  symmetric, $\mathscr{U}$ is also symmetric. Now, let us suppose
  $\mathscr{U}$ is not transitive. There are 
  three states $q_1,q_2,q_3\in Q$ such 
  that: $q_1\,\mathscr{U}\,q_2\wedge q_2\,\mathscr{U}\,q_3 \wedge \neg\;
  q_1\,\mathscr{U}\,q_3$. From the fact that
  $\mathscr{U}\subseteq\mathscr{R}$ and $\mathscr{R}$ is an equivalence relation, we
  get $q_1\,\mathscr{R}\,q_3$. With $\neg\; q_1\,\mathscr{U}\,q_3$ and the
  definition of $\mathscr{U}$ there is $a\in\Sigma$ such that only one of $\{q_1,q_3\}$ 
  belongs to $\pre_a(Q)$. Let us suppose we have 
  $q_1\in\pre_a(Q)$ and $q_3\not\in\pre_a(Q)$. The problem is that $q_1\in\pre_a(Q)$ and
  $q_1\,\mathscr{U}\,q_2$ implies $q_2\in\pre_a(Q)$. With
  $q_2\,\mathscr{U}\,q_3$ we also get $q_3\in\pre_a(Q)$ which contradicts
   $q_3\not\in\pre_a(Q)$.
 \item Let us suppose the existence of a bisimulation $\mathscr{S}$
   included in $\mathscr{R}$ but not in $\mathscr{U}$. There are two states $q_1,q_2\in Q$ such
  that: $q_1\,\mathscr{S}\,q_2 \wedge \neg\;q_1\,\mathscr{U}\,q_2$. From
 $\mathscr{S}\subseteq\mathscr{R}$ we get $q_1\,\mathscr{R}\,q_2 $. With
 $\neg\; q_1\,\mathscr{U}\,q_2$ and the definition of $\mathscr{U}$ there
 is $a\in\Sigma$ such that only one of $\{q_1,q_2\}$ 
  belongs to $\pre_a(Q)$. Let us suppose we have  $q_1\in\pre_a(Q)$ and
  $q_2\not\in\pre_a(Q)$. With $q_1\,\mathscr{S}\,q_2$ we get $q_2\in
 \mathscr{S}\mathrel{\circ} \pre_a (Q)$. With the hypothesis that $\mathscr{S}$ is a
 simulation, we get $q_2\in \pre_a\mathrel{\circ} \mathscr{S}(Q)$ and thus
 $q_2\in \pre_a(Q)$, since $\mathscr{S}(Q)\subseteq Q$, which contradicts
 $q_2\not\in\pre_a(Q)$.
\end{itemize}
\end{proof}

The conjunction of \eqref{eq:InitRefineRestriction} and of the trivial
condition $\mathscr{R}(Q)=Q$ is the cornerstone of what follows.

\begin{definition}
   Let $T=(Q,\Sigma,\rightarrow)$ be a LTS and
  $\mathscr{R}$ be an equivalence relation on $Q$.
  \begin{itemize}\item 
    A subset $L$ of $Q$ is said a \emph{potential splitter} of
    $\mathscr{R}$ if $\mathscr{R}(L)=L$, $L$ is composed of at least two
    blocks of $\mathscr{R}$ and:
    $\forall a\in\Sigma\;.\;\mathscr{R}\circ\pre_a(L)\subseteq\pre_a(L)$.
  \item a couple $(L_1,L_2)$ of two non empty subsets of $Q$ is said a \emph{splitter} of
    $\mathscr{R}$ if: $L_1\cap L_2=\emptyset$, $\mathscr{R}(L_1)=L_1$, $\mathscr{R}(L_2)=L_2$
    and: $\forall a\in\Sigma\;.\;\mathscr{R}\circ\pre_a(L_1 \cup
    L_2)\subseteq\pre_a(L_1\cup L_2)$.
  \end{itemize}
\end{definition}

The key element of the preceding definition is that a splitter does not
depend on a specific letter. This is in sharp contrast to what is generaly
found in the literature since \cite{Hop71} (to our knowledge, there is only
one exception:
\cite{BC08}). The second difference with the 
literature is that our splitters are couples of 
sets of blocks. Indeed, in the literature, the second element of our
splitters is hidden.  However, its presence is essential as it allows us to split
sequentially, with different letters,
the current relation with the same splitter. Furthermore, it will give us
more freedom in the choice of the splitter to use.

From now on, we consider that $T=(Q,\Sigma, \rightarrow)$ is a DLTS,
$\mathscr{R}$ is an equivalence relation on $Q$ and $(L_1,L_2)$ is a splitter of
$\mathscr{R}$. We define for $X\subseteq Q$:
  \begin{displaymath}
    \splitBlock(X,\mathscr{R})\triangleq\mathscr{R}\setminus
    \cup_{q\in X}\{(q,q'),(q',q)\in Q\times Q\suchthat
    q'\in\mathscr{R}(q)\wedge q'\not\in
    X\}\enspace .
  \end{displaymath}

$ \splitBlock(X,\mathscr{R})$ amounts to split all blocks $C$ such that
$C\cap X\not=\emptyset$ and $C\setminus X\not=\emptyset$ in two blocks
$C_1=C\cap X$ and $C_2=C\setminus X$, see Fig.~\ref{fig:split}. 
When in an equivalence relation $\mathscr{R}$ we just split some blocks $C$
in two parts, the resulting 
relation is still an equivalence relation. Therefore,
$\splitBlock(X,\mathscr{R})$ is still an equivalence relation. Furthermore,
a block of $\mathscr{R}$ whose all elements are in $X$ or for which
no element is in $X$ is still a block in $\splitBlock(X,\mathscr{R})$. 

\begin{figure}[h]
  \centering
\subfigure[{From $\mathscr{R}$ to $\mathscr{U}$.}]{
  \begin{tikzpicture}[shorten >=2pt, shorten <=2pt,font=\footnotesize]
        \begin{scope}[every node/.style={draw,dashed,circle,minimum
            width=.5cm},node distance=7mm]
          \path  node (L11) {}
          node[below of=L11] (L12) {}
          node[below of=L12] (L13) {};
          \node [node distance=8mm,above of=L11,fill=black!20] (L21) {}
          node[above of=L21] (L22) {} ;
        \end{scope}
        \splitnodend[0]{}{fill=black!20}{L11}
        \splitnodend[0]{}{fill=black!20}{L12}
        \splitnodend[0]{}{fill=black!20}{L22}
        \node[draw,rounded corners,inner sep=3pt,fit=(L11) (L12)
        (L13),label=right:$L_1$] (L1) {};
        \node[draw,rounded corners,inner sep=3pt,fit=(L21)
        (L22),label=right:$L_2$] (L2) {};        
        \path
         (L1.north west) coordinate (nw1)
        (L2.south west) coordinate (sw1) 
       ($(nw1)!.5!(sw1)$) 
        ++(-1cm,0)
        node[circle,draw,dashed,minimum width=1cm,anchor=center,label=left:$C$] (C) {};
        \splitnodend[0]{}{fill=black!20}{C}

        \path (CUn.center) +(.2,0) coordinate[label={left}:{$q$}] (q) ;
        \path (CDeux.center) +(.2,0) coordinate[label={left}:{$q'$}] (q') ;
         \fill (q) circle (1pt) (q') circle (1pt) ;

        \begin{scope}[every node/.style={circle,minimum
            width=.5cm},node distance=7mm]
          \path ++(3.5cm,0) node (L11') {}
          node[below of=L11'] (L12') {}
          node[below of=L12',draw,dotted,semithick] (L13') {};
          \node [node distance=8mm,above of=L11',draw,dotted,semithick,fill=black!20] (L21') {}
          node[above of=L21'] (L22') {} ;
        \end{scope}
        \splitnode[1]{dotted,semithick}{dotted,semithick,fill=black!20}{L11'}
        \splitnode[1]{dotted,semithick,}{dotted,semithick,fill=black!20}{L12'}
        \splitnode[1]{dotted,semithick,}{dotted,semithick,fill=black!20}{L22'}
        \node[draw,rounded corners,inner sep=3pt,fit=(L11') (L12')
        (L13'),label=right:$L_1$] (L1') {};
        \node[draw,rounded corners,inner sep=3pt,fit=(L21')
        (L22'),label=right:$L_2$] (L2') {};       
        \path
        ($(L1'.north west)!.5!(L2'.south west)+(-1cm,0)$) 
        node[circle,minimum width=1cm,anchor=center] (C') {};
        \splitnode[2]{dotted,semithick}{dotted,semithick,fill=black!20}{C'}
        \path (C'Un.center) +(.2,0) coordinate[label={left}:{$q$}] (qR) ;
        \path (C'Deux.center) +(.2,0) coordinate[label={left}:{$q'$}] (q'R) ;
         \fill (qR) circle (1pt) (q'R) circle (1pt) ;

        \path (C'Un.south west) node[anchor=north,font=\scriptsize] {$C_1$};
        \path (C'Deux.north west) node[anchor=south,font=\scriptsize] {$C_2$};        
        \path[->,auto,circle,inner sep=1pt,thick]
        (q) edge node {$a$} (L1)
        (q') edge node {$a$} (L2)
        (qR) edge node {$a$} (L1')
        (q'R) edge node {$a$} (L2') ;    
      \end{tikzpicture}
      \label{fig:split}
} \hfill
\subfigure[center][Contradiction of \endgraf  $\mathscr{S}\subseteq\mathscr{R}$ and
$\mathscr{S}\not\subseteq\mathscr{U}$.]{
  \begin{tikzpicture}[shorten >=2pt, shorten <=2pt,font=\footnotesize]

    \path[fill] coordinate (q) circle (1pt) ;
    \path (q) node[left=3pt,inner sep=0] (qlabel) {$q$};
    \path[fill] (q) +(0,1.2) coordinate  (q') circle (1pt) ;
    \path (q') node[left=2pt,inner sep=0] (q'label) {$q'$};
    
    \node[draw,dashed,ellipse, minimum width=1cm, fit=(q) (qlabel) (q') (q'label)] (C)
    {} ;

    \path[every edge/.style={->,dashed,draw},circle,inner sep=0pt, every node/.style={fill=white}]
    (q) edge[bend angle=25,bend left] node[inner sep=0pt,font=\scriptsize] {$\mathscr{S}$} (q')
    ;

    \fill[nearly transparent] let \p{w}=(C.west), \p{n}=(C.north),
    \n{rh}={(\x{n} - \x{w})}, \n{rv}={(\y{n} - \y{w})} in 
    (C.east)  arc (0:-180:\n{rh} and \n{rv}) -- cycle;

    \path[fill](q) +(2,-1cm) coordinate (q1) circle (1pt) ;
    \path (q1) node[right=2pt,inner sep=0] (q1label) {$q_1$};
    \path[fill] (q1) +(0,1.2) coordinate  (q'1) circle (1pt) ;
    \path (q'1) node[right=2pt,inner sep=0] (q'1label) {$q'_1$};    
   
    \path[every edge/.style={->,dashed,draw},circle,inner sep=3pt, every node/.style={fill=white}]
    (q1) edge[bend angle=25,bend left] node[inner sep=0pt,font=\scriptsize] {$\mathscr{S}$} (q'1)
    ;

     \node[draw,dashed,ellipse,minimum width=1.3cm,fit=(q1) (q1label) (q'1) (q'1label)] (C')
    {} ;

    \path (C'.north) +(0,1) coordinate (nl1);
     \node[draw,rectangle,label=right:$L_1$,rounded corners,rounded corners=.5cm,fit=(C') (nl1) ] (L1)
    {} ;

     \path[->,auto,circle,inner sep=1pt,thick]
     (q) edge node {$a$} (q1)
       (q') edge node {$a$} (q'1) ;
   
  \end{tikzpicture}
  \label{fig:contradiction}
}
\captionsetup{justification=centering}
    \caption{
      $\mathscr{U}=\splitBlock
      (\pre_a(L_1),\mathscr{R})$.
    \endgraf \small Blocks of $\mathscr{R}$ are dashed, blocks
      of $\mathscr{U}$ are dotted, 
       $\pre_a(L_1)$ is in gray.}     
\label{fig:refinement}
  \end{figure}

Let $\mathscr{U}=\splitBlock(\pre_a(L_1),\mathscr{R})$ and let us consider
a block $C$ of $\mathscr{R}$ which contains an element $q$ in $\pre_a(L_1)$
and an element $q'$ not in $\pre_a(L_1)$, see Fig.~\ref{fig:split}. Since 
$C\xrightarrow{a}(L_1\cup L_2)$, by definition of a splitter, we have 
$C\subseteq\pre_a(L_1\cup L_2)$ and thus $q'$ is in $\pre_a(L_2)$. This implies that
$\splitBlock(\pre_a(L_1),\mathscr{R})=\splitBlock(\pre_a(L_2),\mathscr{R})$ and in no block of
$\mathscr{U}$ there is a state belonging in $\pre_a(L_1)$
and another state not belonging in $\pre_a(L_1)$. Therefore,
$\mathscr{U}\circ\pre_a(L_1)\subseteq\pre_a(L_1)$ and, by symmetry, 
$\mathscr{U}\circ\pre_a(L_2)\subseteq\pre_a(L_2)$. This is illustrated by
the right part of Fig.~\ref{fig:split}.

Now, let us suppose there is a bisimulation $\mathscr{S}$ included in
$\mathscr{R}$ but not included in $\mathscr{U}$, see
Fig.~\ref{fig:contradiction}. This means that there are two states $q$ and 
$q'$ of the same block of $\mathscr{R}$ such that $q\,\mathscr{S}\,q'$,
$q\in\pre_a(L_1)$ and $q'\not\in\pre_a(L_1)$. Let $q_1$ be the state of
$L_1$ such that $q\in\pre_a(q_1)$. We thus have
$q'\in\mathscr{S}\circ\pre_a(q_1)$. With the fact that $\mathscr{S}$ is a
simulation, we infer that $q'\in\pre_a\mathrel{\circ}\mathscr{S}(q_1)$
which implies the existence of a state $q'_1$ such that
$q_1\,\mathscr{S}\,q'_1$ and $q'\in\pre_a(q'_1)$. But, remember, we have
$\mathscr{S}\subseteq\mathscr{R}$, $q_1\in L_1$ and
$\mathscr{R}(L_1)=L_1$. Therefore, $q'_1\in L_1$ and
$q'\in\pre_a(L_1)$. Which contradicts $q'\not\in\pre_a(L_1)$.

From what precedes, we infer the following theorem.
\begin{theorem}
  \label{th:split}
  Let $T=(Q,\Sigma,\rightarrow)$ be a DLTS and $(L_1,L_2)$ be a
  splitter of $\mathscr{R}$ an equivalence relation on
  $Q$. Let $\mathscr{U}=\splitBlock(\pre_a(L_1),\mathscr{R})$. Then,
  $\mathscr{U}$ is an equivalence relation and any
  bisimulation included in  
  $\mathscr{R}$ is also included in $\mathscr{U}$. Furthermore:
  $\mathscr{U}\circ\pre_a(L_1)\subseteq\pre_a(L_1)$ and  
 $\mathscr{U}\circ\pre_a(L_2)\subseteq\pre_a(L_2)$.  
\end{theorem}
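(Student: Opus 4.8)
The plan is to unpack the statement into its three separate assertions and handle each in turn; most of the work has already been done in the discussion above, so what remains is mainly to organize it. For the first assertion, that $\mathscr{U}$ is an equivalence relation, I would appeal to the observation made right after the definition of $\splitBlock$: passing from $\mathscr{R}$ to $\splitBlock(\pre_a(L_1),\mathscr{R})$ only replaces each block $C$ that meets both $\pre_a(L_1)$ and its complement by the two nonempty blocks $C\cap\pre_a(L_1)$ and $C\setminus\pre_a(L_1)$, leaving every other block intact; refining a partition in this way yields a partition, so $\mathscr{U}$ is again an equivalence relation.

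For the two containments, I would obtain $\mathscr{U}\circ\pre_a(L_1)\subseteq\pre_a(L_1)$ directly from the shape of $\splitBlock(\pre_a(L_1),\mathscr{R})$: no block of $\mathscr{U}$ contains both a state of $\pre_a(L_1)$ and a state outside it, so $x\,\mathscr{U}\,y$ with $x\in\pre_a(L_1)$ forces $y\in\pre_a(L_1)$. The containment $\mathscr{U}\circ\pre_a(L_2)\subseteq\pre_a(L_2)$ needs an extra ingredient, since we split along $\pre_a(L_1)$ rather than $\pre_a(L_2)$; the key step is to show $\splitBlock(\pre_a(L_1),\mathscr{R})=\splitBlock(\pre_a(L_2),\mathscr{R})$. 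Given a block $C$ of $\mathscr{R}$ with $q\in C\cap\pre_a(L_1)$ and $q'\in C\setminus\pre_a(L_1)$, I would use $q\in\pre_a(L_1\cup L_2)$ and $C=\mathscr{R}(q)$ together with the splitter condition $\mathscr{R}\circ\pre_a(L_1\cup L_2)\subseteq\pre_a(L_1\cup L_2)$ to conclude $C\subseteq\pre_a(L_1)\cup\pre_a(L_2)$, hence $q'\in\pre_a(L_2)$; and since $T$ is deterministic, $\pre_a(L_1)\cap\pre_a(L_2)=\emptyset$, so the partition of $C$ cut out by $\pre_a(L_1)$ coincides with the one cut out by $\pre_a(L_2)$. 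With this equality, the argument for the first containment, re-run with $\pre_a(L_2)$, gives the second.

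The last assertion, that every bisimulation $\mathscr{S}\subseteq\mathscr{R}$ remains inside $\mathscr{U}$, is where I expect the real work. I would argue by contradiction: a pair $(q,q')\in\mathscr{S}\cap(\mathscr{R}\setminus\mathscr{U})$ has, by the form of $\splitBlock$, exactly one endpoint in $\pre_a(L_1)$, and after possibly replacing $\mathscr{S}$ by $\mathscr{S}^{-1}$ (still a bisimulation, still contained in $\mathscr{R}=\mathscr{R}^{-1}$) and swapping $q,q'$, I may assume $q\in\pre_a(L_1)$ and $q'\notin\pre_a(L_1)$. Choosing $q_1\in L_1$ with $q\xrightarrow{a}q_1$ gives $q'\in\mathscr{S}\circ\pre_a(q_1)$; the simulation property $\mathscr{S}\circ\pre_a\subseteq\pre_a\circ\mathscr{S}$ then produces $q'_1$ with $q_1\,\mathscr{S}\,q'_1$ and $q'\xrightarrow{a}q'_1$; finally $\mathscr{S}\subseteq\mathscr{R}$ forces $q'_1\in\mathscr{R}(L_1)=L_1$, so $q'\in\pre_a(L_1)$, contradicting the choice of $q'$. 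The main obstacle here is twofold: applying the simulation inclusion in the correct direction (starting from the transition out of $q$, not out of $q'$), and exploiting the invariance $\mathscr{R}(L_1)=L_1$ rather than the weaker $\mathscr{R}(L_1\cup L_2)=L_1\cup L_2$ to pull $q'_1$ back into $L_1$; the determinism observation in the previous paragraph is a close second in delicacy.
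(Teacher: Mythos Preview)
Your proposal is correct and follows essentially the same route as the paper's own argument (given in the two paragraphs preceding the theorem): the same use of the splitter condition to push a block $C$ into $\pre_a(L_1\cup L_2)$ and obtain $\splitBlock(\pre_a(L_1),\mathscr{R})=\splitBlock(\pre_a(L_2),\mathscr{R})$, and the same contradiction via the simulation inclusion together with $\mathscr{R}(L_1)=L_1$. You are merely a bit more explicit than the paper in two places---invoking determinism to get $\pre_a(L_1)\cap\pre_a(L_2)=\emptyset$, and justifying the ``without loss of generality'' step by passing to $\mathscr{S}^{-1}$---but these are refinements of presentation, not differences of strategy.
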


Let us come back to equation \eqref{eq:withRemove} and consider Fig.~\ref{fig:split}.
We have $\mathscr{R}\circ\pre_a(L_1)\subseteq\pre_a(L_1)\cup\remove_a(L_1)$
with $\remove_a(L_1)=\pre_a(L_2)$ which clearly depends on $a$. With
equation \eqref{eq:withNotRel} we 
have: $\mathscr{R}\circ\pre_a(L_1)\subseteq\pre_a(L_1\cup\notRel(L_1))$
with $\notRel(L_1)=L_2$ which does not depend on any letter. This was our goal.

From last theorem, the algorithm will maintain a set of potential
splitters. At each main iteration, we choose $L$ one of them. Then, we
choose in $L$ a block $B$, no matter which one. We set $L_1$ to be the
smallest set between $B$ and $L\setminus B$; we set $L_2$ to be the other
one. Then, we iteratively split the current partition, with the non empty
$\pre_a(L_1)$. The magical thing is that $(L_1,L_2)$ remains a splitter
during these iterations even if some blocks in $L_1$ or in $L_2$ are
split. Management of the letters being aside, the resulting algorithm is
quite similar to the one in \cite{PT87}. However, knowing that the LTS being
deterministic, there is no need of counters like in \cite{PT87}.

\section{The Bisimulation Algorithm }
\label{sec:bisim}

  \begin{function}
    \caption{Init($T,P_\mathrm{init}$) with $T={(Q,\Sigma,\rightarrow)}$\label{func:init}}   
    $P := \Copy(P_\mathrm{init})$;    
    \lForAll{$a\in\Sigma $\nllabel{init:l1}}
    {$a.\PreSmal :=\emptyset$;}

    \lForAll{$q\xrightarrow{a}q'\in\rightarrow$\nllabel{init:l2}}
    {                     
      $a.\PreSmal := a.\PreSmal \cup \{q\}$
    }     
                        
    \lForAll{$a\in \Sigma$\nllabel{init:l3}}
    {                
      $(P,\_) :=\Split(a.\PreSmal,P)$\;      
    }
    \Return{$(P)$}
  \end{function}
  
\begin{function}
  \caption{DBisim($T, P_\mathrm{init}$) with $T={(Q,\Sigma,\rightarrow)}$}
  \label{func:bisi}
  $P := \Init(T,P_\mathrm{init})$\nllabel{bisi:l1}\}\;
 \lIf {$|P|=1$}{\Return{$(P)$}} \tcc*[h]{nothing more has to
 be done}\nllabel{bisi:l2}\}\;
  $S:=\{Q\}$;
  $alph := \emptyset$;
  \lForAll{$a\in \Sigma $}
  {$a.\PreSmal := \emptyset$\nllabel{bisi:l3}}\;
             
  \While{$\exists L \in S$\nllabel{bisi:l4}} {      
    \tcc*[h]{Assert : $alph = \emptyset\wedge(\forall a\in\Sigma\,.\,a.\PreSmal
      = \emptyset)$\lnl{bisi:l5}}\;
      
    {Let $B$ be any block of $L$}\nllabel{bisi:l6}\;
 
    \lIf {there are only two blocks in $L$}{$S:=S\setminus\{L\}$}
    \lElse{$L:=L\setminus B$}\nllabel{bisi:l7}\;
    \lIf {$|B|\leq |L\setminus B|$}{$Smaller:=B$}
    \lElse{$Smaller:=L\setminus B$}\nllabel{bisi:l8}\;
    
    \ForAll{$q\xrightarrow{a}q_1\in \rightarrow$ \KwSty{such that} $q_1\in Smaller$\nllabel{bisi:l9}}
    {
      $alph := alph \cup \{a\}$;
      $a.\PreSmal := a.\PreSmal \cup \{q\}$\nllabel{bisi:l10}\;    
    }
      
    \ForAll{$a\in alph$ \nllabel{bisi:l11}}
    {        
      $(P, Splitted) :=
      \Split(a.\PreSmal,P)$\nllabel{bisi:l12}\;
        
      \ForAll{$C\in Splitted$\nllabel{bisi:l13}}
      {
        \lIf{$C\not\subseteq\cup S$}{$S:=S\cup\{C\}$\nllabel{bisi:l14}\;}
      }
    }    
    \lForAll{$a\in alph $\nllabel{bisi:l15}}{$a.\PreSmal := \emptyset$}\;
    $alph := \emptyset$\nllabel{bisi:l16}\;
  }
  $P_\mathrm{bis}:=P$;
  \Return{$(P_\mathrm{bis})$}
\end{function}

In the remainder of the paper, all LTS are finite and deterministic.
Given a DLTS $T=(Q,\Sigma,\rightarrow)$ and an initial
partition $P_\mathrm{init}$ of $Q$, inducing an equivalence relation
$\mathscr{R}_\mathrm{init}$, the algorithm manages a set $S$ of 
potentials splitters to iteratively refine the partition $P$ initially
equals to $P_\mathrm{init}$. At the end,
$P$ represents $P_\mathrm{bis}$ the partition whose induced equivalence relation
$\mathscr{R}_\mathrm{bis}$ is the coarsest bisimulation over $T$ included in
$\mathscr{R}_\mathrm{init}$.

The partition $P$ is a set of blocks. A block is assimilated with its set
of states. The set $S$ is a set of subsets of $Q$. We will see that $S$ is
a set of potential splitters. To each letter $a\in\Sigma$ we associate a
set of states noted $a.\PreSmal$ since it corresponds
to $\pre_a(Q)$ after
the \textbf{forall} loop at line~\ref{init:l2} of \texttt{Init} and 
to $\pre_a(Smaller)$ after
the \textbf{forall} loop at line~\ref{bisi:l9} of \texttt{DBisim}.

The main function \texttt{DBisim} uses two others functions: \texttt{Split},
and \texttt{Init}. Function
\texttt{Split}($X,P$) splits each block $C$ of $P$, having at least one
element in $X$ and another one not in $X$, in two blocks $C_1=C\cap X$ and
$C_2=C\setminus X$, and returns the resulting partition and the list of
blocks $C$ that have been split. It is mainly an implementation of function
$\splitBlock(X,\mathscr{R}_P)$ seen in the previous section. 
Function \texttt{Init}, also uses \texttt{Split} and returns the partition
whose induced equivalence relation is $\initRefine(\mathscr{R}_\mathrm{init})$. 

\subsection{Correctness}
\label{sec:correctness}

Let us first consider function \texttt{Init($T,P_\mathrm{init}$)}. All line
numbers in this paragraph refer to function 
\texttt{Init}. At line~\ref{init:l2}, we identify for each
$a\in\Sigma$ the states which have an outgoing transition labelled by
$a$. Then, at line~\ref{init:l3}, for each $a\in\Sigma$ we separate in all
the blocks the states which have an outgoing transition labelled by $a$ and
the states that do not have an outgoing transition labelled by $a$.  Clearly,
the resulting partition corresponds to the relation $\initRefine(R_\mathrm{init})$.

Let us consider function \texttt{DBisim$(T,P_\mathrm{init})$}. From now on, all
line numbers refer to function \texttt{DBisim}.
Let $S'=\{B\in P\suchthat B\not\subseteq \cup S\}$. We prove, by an induction,
that the following property is an invariant of the \textbf{while} loop of
\texttt{DBisim}. The relation $\mathscr{R}$ is the  
equivalence relation induced by the current partition $P$.
\begin{gather}
  \label{eq:S_isStable}
    L\in S \Rightarrow \mathscr{R}(L)=L \text{ and } L \text{ contains at
      least two blocks of $\mathscr{R}$}\enspace .
\end{gather}

 Just before the execution of the \textbf{while} loop, $S$
contains only one element: $Q$. Thanks to the test at line~\ref{bisi:l2},
$Q$ is made of at least two blocks, and we obviously have
$\mathscr{R}(Q)=Q$, which satisfies 
property \eqref{eq:S_isStable}.

Let us assume \eqref{eq:S_isStable} is satisfied before an iteration
of the \textbf{while} loop. The set $S$ can be modified only at lines
~\ref{bisi:l7} and \ref{bisi:l14}. Let $L$ be the element of $S$ chosen at
line~\ref{bisi:l4}. By induction hypothesis, $L$ is composed of at least
two blocks and $\mathscr{R}(L)=L$. If $L$ is withdrawn from $S$, at
line~\ref{bisi:l7}, this is
because $L$ is composed of 
exactly two blocks that are implicitly added in $S'$. If $L$ is not
withdrawn from $S$ then a block, $B$, 
is withdrawn from $L$ which is composed, induction hypothesis and condition
of the \textbf{if} at
line~\ref{bisi:l7}, of at least three blocks. From the hypothesis that
$\mathscr{R}(L)=L$ and from $\mathscr{R}(B)=B$ since $B$ is a block, we also have
$\mathscr{R}(L\setminus B)=L\setminus B$. Property \eqref{eq:S_isStable} is therefore
not modified by line~\ref{bisi:l7}. At line~\ref{bisi:l14}, a block $C$ which
has been split at line~\ref{bisi:l12}, and thus implicitly withdrawn from
$S'$, is added into $S$. Since $C$ has just been split, and thus contains
two complete blocks, property \eqref{eq:S_isStable} is still true and thus is an
invariant of the \textbf{while} loop.

Now, let us consider the two following properties:
\begin{gather}
   \label{eq:bisimS}
 \mathscr{S} \text{ is a bisimulation over $T$ included in }
 \mathscr{R}_\mathrm{init} \Rightarrow 
 \mathscr{S}\subseteq\mathscr{R} \enspace .\\
 \label{eq:SS'}
     L\in  S\cup S' \Rightarrow \forall a\in\Sigma\,.\,
     \mathscr{R}\circ\pre_a(L)\subseteq\pre_a(L)\enspace .
\end{gather}
Thanks to function \texttt{Init} and Lemma~\ref{lem:InitRefine}, these
properties are satisfied before the 
\textbf{while} loop. Remember that $S=\{Q\}$ and $S'=\emptyset$ at this
moment. 
Let us 
suppose they are satisfied before an iteration of the \textbf{while} loop. Let $L$ be the
element of $S$ chosen at line~\ref{bisi:l4}. After line~\ref{bisi:l8}, $L=L_1\cup
L_2$ with $L_1=Smaller$ and $L_2=L\setminus Smaller$. Since $Smaller$ is a
block and $\mathscr{R}(L)=L$ by \eqref{eq:S_isStable}, we have:
$\mathscr{R}(L_1)=L_1$
and $\mathscr{R}(L_2)=L_2$. This implies that $(L_1,L_2)$ is a
splitter of $\mathscr{R}$. Note that, during the iteration, $\mathscr{R}$
can only be refined (by function $\Split$, line~\ref{bisi:l12}). This
implies that $(L_1,L_2)$ stays a splitter of $ \mathscr{R}$ during the
iteration. Furthermore, after each iteration of the \textbf{forall} loop at
line~\ref{bisi:l11}, from Theorem~\ref{th:split}, we have
$\mathscr{S}\subseteq\mathscr{R}$ since  this is the case, by induction
hypothesis, before the iteration. The fact that, $\mathscr{R}$
can only be refined also implies that property~\eqref{eq:SS'} is still
satisfied after the iteration for all elements of $S\cup S'$ different from
$L_1$ and $L_2$. Let us consider their cases. Let $\mathscr{R}_c$ be the
value of $\mathscr{R}$ after the iteration of the \textbf{forall} loop at
line~\ref{bisi:l11} for $a=c$. Then, from Theorem~\ref{th:split} we have
$\mathscr{R}_c\circ\pre_c(L_1)\subseteq\pre_c(L_1)$ and
$\mathscr{R}_c\circ\pre_c(L_2)\subseteq\pre_c(L_2)$. At the
end of the iteration of the while loop, we obviously have
$\mathscr{R}\subseteq\mathscr{R}_c$ and thus: $\forall a\in
alph\,.\,(\mathscr{R}\circ\pre_a(L_1)\subseteq\pre_a(L_1) \wedge
\mathscr{R}\circ\pre_a(L_2)\subseteq\pre_a(L_2))$. Let $a\not\in alph$ this
means that $\pre_a(L_1)=\pre_a(Smaller)=\emptyset$ and thus
$\mathscr{R}\circ\pre_a(L_1)\subseteq\pre_a(L_1)$. Therefore, we have, with
$\mathscr{R}'$ the value  of $\mathscr{R}$ before the
iteration:
$\mathscr{R}'\circ\pre_a(L_2)=\mathscr{R}'\circ\pre_a(L)$ which is included,
by induction hypothesis of \eqref{eq:SS'}, in $\pre_a(L)=\pre_a(L_2)$ and thus
$\mathscr{R}\circ\pre_a(L_2)\subseteq\pre_a(L_2)$ at the end of the
iteration. In summary, properties \eqref{eq:bisimS} and \eqref{eq:SS'} are
invariants of the \textbf{while} loop.

We postpone to the next section the proof of termination of
$\mathtt{DBisim}$. This will be done by a
complexity argument. For the moment, note that 
the execution of the \texttt{while} loop ends when $S=\emptyset$ and
thus, by definition of $S'$, $P$ as a set of blocs is included in
$S'$. With \eqref{eq:SS'}, all of that implies, with $\mathscr{R}_\mathrm{bis}$ the
last value of $\mathscr{R}$:
$\forall B\in P\,\forall a\in\Sigma\;.\;
  \mathscr{R}_\mathrm{bis}\circ\pre_a(B)\subseteq\pre_a(B)$.
By Proposition~\ref{prop:blockBis} this means that $\mathscr{R}_\mathrm{bis}$ is a
bisimulation, by \eqref{eq:bisimS} it is the coarsest one included in
$\mathscr{R}_\mathrm{init}$.  Furthermore, since
$\mathscr{R}_\mathrm{bis}=\mathscr{R}_{P_\mathrm{bis}}$ with
$P_\mathrm{bis}$ a partition, the last value of $P$, it is an equivalence relation.

\subsection{Complexities}
\label{sec:complexities}
Let $X$ be a set of elements, we qualify an encoding of $X$ as
\emph{indexed} if the elements of $X$ are encoded in an array of 
$|X|$ slots, one for each element. Therefore, an element of $X$ can be
identified with its index in this array.
Let $T=(Q,\Sigma, \rightarrow)$ be a LTS, an encoding of $T$ is said
\emph{normalized} if the encodings of $Q$, $\Sigma$ and $\rightarrow$ are
indexed, a transition is encoded by the index of its source
state, the index of its label and the index of its destination state, and
if  $|Q|$ and $|\Sigma|$ are in $O(|{\rightarrow}|)$.
If $|\Sigma|$ is not in $O(|{\rightarrow}|)$, we
can restrict $\Sigma$ to its 
really used part $\Sigma'=\{a\in\Sigma\suchthat \exists q,q'\in
Q\;.\;q\xrightarrow{a}q'\in\rightarrow\}$ whose size is less than 
$|{\rightarrow}|$. To do this, we can use hash table techniques, sort
the set $\rightarrow$ with the keys being the letters labelling the transitions,
or more efficiently use a similar technique of the one we used in the algorithm to
distribute a set of transitions relatively to its labels (see, as an
example, the \textbf{forall} loop at line \ref{bisi:l9} of
\texttt{DBisim}). This is done in $O(|\Sigma|+|{\rightarrow}|)$ time and uses
$O(|\Sigma|)$ space. We learned that
this may be done in $O(|{\rightarrow}|)$ time, still with
$O(|\Sigma|)$ space, by using a technique presented
in \cite{VL08} and which is also called "weak sorting" according to
\cite{BC08}. 
If $|Q|$ is not in $O(|{\rightarrow}|)$ this means there are states that
are not involved in any transition. In general, these states are
ignored.  In fact, just after our initialization phase done by function
\texttt{Init}, these states are in blocks that will
not changed during the execution of the algorithm. Therefore, we can also
restrict $Q$ to its useful part $\{q\in Q\suchthat \exists q'\in Q\;\exists
a\in \Sigma\;.\; q\xrightarrow{a}q'\in\rightarrow \vee
\;q'\xrightarrow{a}q\in\rightarrow\}$ whose size is in
$O(|{\rightarrow}|)$. This is done like for $\Sigma$. 
\begin{center}
  All encodings of LTS in this section are assumed to be normalized.
\end{center}

Let us assume the following hypotheses:
\begin{itemize}
\item scanning the elements of one of the following sets is done in time
  proportional to its size: $\rightarrow$, $a.\PreSmal$ for any
  $a\in\Sigma$, $L$ an element of $S$, $B$ a block of $P$,
  the set of transitions leading to an element $L$ of $S$ or to a block $B$ of
  $P$, and $Splitted$.
\item $\Split(a.\PreSmal,P)$ is executed in time proportional to the size of
  $a.\PreSmal$.
\item function $\Copy(P_\mathrm{init})$ is executed in $O(|Q|)$ time.
\item all the other individual instructions in functions \texttt{Init} and
  \texttt{DBisim} are done in constant time or amortized constant time.
\item all the data structures use only $O(|{\rightarrow}|)$ space.
\end{itemize}

Let us consider the time complexity of function \texttt{Init}. The
\textbf{forall} loop at line \ref{init:l1} is done in time
$O(|{\Sigma}|)$ and thus $O(|{\rightarrow}|)$ since $T$ is supposed to be
normalized. The \textbf{forall} loop at line \ref{init:l2} is done in time 
$O(|{\rightarrow}|)$. The \textbf{forall} loop at line \ref{init:l3} is
also done in time $O(|{\rightarrow}|)$ since we have: 
$\Sigma_{a\in\Sigma}|a.\PreSmal|\leq
\Sigma_{a\in\Sigma}|{\xrightarrow{a}}|\leq
|{\rightarrow}|$. Therefore, \texttt{Init} is done in
$O(|{\rightarrow}|)$-time.

Let us now consider function \texttt{DBisim}. Let us first remark that
during an iteration of the \texttt{forall} loop at line~\ref{bisi:l11}, 
$|Splitted|\leq |a.\PreSmal|$. Furthermore, during an iteration of 
the \textbf{while} loop,
 $|alph|$ and $\Sigma_{a\in alph}|a.\PreSmal|$  are less than the
number of transitions scanned during the \texttt{forall} loop at
line~\ref{bisi:l9}. This implies that the time complexity of function 
\texttt{DBisim} is proportional to the overall number of transitions scanned at
line~\ref{bisi:l9}.
But thanks to line~\ref{bisi:l8}, each time a transition
$q\xrightarrow{a}q'$ is used at line~\ref{bisi:l9}, $q_1$ belongs to the
set $Smaller$
whose size is less than half of the size of the previous time.

From all of this, we get the following theorem.

\begin{theorem}
  \label{th:complexities}
  Let $T=(Q,\Sigma, \rightarrow)$ be a DLTS and $P_\mathrm{init}$ be an
  initial partition of $Q$ inducing an equivalence relation
  $\mathscr{R}_\mathrm{init}$. Function \texttt{DBisim} 
  computes 
  $P_\mathrm{bis}$ the partition whose corresponding equivalence relation
  $\mathscr{R}_\mathrm{bis}$ is the coarsest bisimulation over $T$ included in
  $\mathscr{R}_\mathrm{init}$ in: 
   $O(|{\rightarrow}|\log |Q|)$-time and 
   $O(|\Sigma| + |{\rightarrow}| + |Q|)$-space.
\end{theorem}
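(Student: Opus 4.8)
The correctness half — that \texttt{DBisim} returns the partition inducing the coarsest bisimulation over $T$ included in $\mathscr{R}_\mathrm{init}$ — is already established in Section~\ref{sec:correctness}, \emph{modulo} termination, via invariants \eqref{eq:S_isStable}, \eqref{eq:bisimS} and \eqref{eq:SS'} together with Proposition~\ref{prop:blockBis} and Lemma~\ref{lem:InitRefine}. So the real work of this theorem is the complexity analysis, and in particular establishing termination as a by-product of a finite time bound. I would structure the proof as: (i) recall that correctness-given-termination is already proved; (ii) bound the time of \texttt{Init}; (iii) bound the time of the \textbf{while} loop of \texttt{DBisim}, which simultaneously proves termination; (iv) bound the space.

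\textbf{Time of \texttt{Init}.} This is the easy part and is already spelled out in the text: the three \textbf{forall} loops cost $O(|\Sigma|)$, $O(|{\rightarrow}|)$ and $O(\sum_{a\in\Sigma}|a.\PreSmal|)\le O(|{\rightarrow}|)$ respectively, and \Copy\ costs $O(|Q|)$; under the normalized-encoding assumption $|\Sigma|,|Q|\in O(|{\rightarrow}|)$, so \texttt{Init} runs in $O(|{\rightarrow}|)$ time. I would state this in one or two sentences.

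\textbf{Time of the \textbf{while} loop (and termination).} This is the crux. The argument has two layers. First, an \emph{amortization within one iteration}: the text observes that $|alph|+\sum_{a\in alph}|a.\PreSmal|$ and $\sum_{\text{iterations of line~\ref{bisi:l11}}}|Splitted|$ are all $O(\text{number of transitions scanned at line~\ref{bisi:l9}})$, because line~\ref{bisi:l9} scans exactly the transitions $q\xrightarrow{a}q_1$ with $q_1\in Smaller$, and every later cost in the iteration is charged against such a transition. Hence the total running time of \texttt{DBisim} is, up to a constant, the \emph{grand total over all iterations} of the number of transitions scanned at line~\ref{bisi:l9}. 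Second, the \emph{``process the smaller half'' bound of Hopcroft/Paige--Tarjan type}: I would fix a transition $q\xrightarrow{a}q_1$ and track the sequence of sets $Smaller$, across the whole run, whose scan at line~\ref{bisi:l9} touches this transition (i.e. for which $q_1\in Smaller$). Each such $Smaller$ is either a block $B$ or $L\setminus B$ where $B$ was chosen at line~\ref{bisi:l8} from a potential splitter $L$; crucially, by invariant \eqref{eq:S_isStable} and the splitting mechanism, the successive sets containing $q_1$ that play this role form a decreasing chain in which each is of size at most half its predecessor (this is precisely what line~\ref{bisi:l8} enforces: $|Smaller|\le|L|/2$, and after $L$ leaves $S$ or shrinks, any future splitter relevant to $q_1$ is contained in the previous $Smaller$ up to the split). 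Starting from a set of size $\le|Q|$, the chain has length $O(\log|Q|)$, so the transition $q\xrightarrow{a}q_1$ is scanned at line~\ref{bisi:l9} at most $O(\log|Q|)$ times over the whole run. Summing over all $|{\rightarrow}|$ transitions gives a grand total of $O(|{\rightarrow}|\log|Q|)$ scans, hence $O(|{\rightarrow}|\log|Q|)$ time; in particular the loop terminates, which retroactively completes the correctness argument of Section~\ref{sec:correctness}.

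\textbf{The main obstacle}, and the step I would write most carefully, is pinning down precisely why the sets $Smaller$ that repeatedly charge a fixed transition form a geometrically shrinking chain. One must argue that once $L$ has been used (and either removed from $S$ at line~\ref{bisi:l7} or replaced by $L\setminus B$), any element of $S$ that later contributes a $Smaller$ set containing $q_1$ is — after the intervening refinements by \Split\ — a subset of the block of the current partition, which is itself contained in the earlier $Smaller$ that last contained $q_1$ (the new potential splitters added at line~\ref{bisi:l14} are split-off blocks $C$, each already a subset of some earlier block). The bookkeeping here is where the set-valued second component $L_2$ of the splitter, and invariant \eqref{eq:S_isStable}, do their work; I would isolate it as the one genuinely new observation relative to the Paige--Tarjan template, everything else being a routine transcription of their counting argument with the counters removed (legitimate because $T$ is deterministic, so $|{\xrightarrow{a}}(q)|\le 1$). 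Finally, for space: the encoding of $T$ is $O(|\Sigma|+|{\rightarrow}|+|Q|)$; $P$, $S$, $S'$, the arrays $a.\PreSmal$ and $alph$, and the auxiliary lists each fit in $O(|\Sigma|+|{\rightarrow}|+|Q|)$ by the stated data-structure hypotheses, giving the claimed space bound. $\qed$
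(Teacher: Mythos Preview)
Your proposal is correct and follows essentially the same approach as the paper: reduce the total cost to the number of transitions scanned at line~\ref{bisi:l9}, then apply the Hopcroft/Paige--Tarjan ``process the smaller half'' halving argument to bound each transition's scans by $O(\log|Q|)$, with termination falling out of the finite time bound. You spell out the halving chain for $Smaller$ in more detail than the paper (which simply asserts that the relevant $Smaller$ is at most half the size of the previous one), but the skeleton---\texttt{Init} in $O(|{\rightarrow}|)$, per-iteration costs charged to line~\ref{bisi:l9}, halving bound, space from the data-structure hypotheses---is identical.
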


The $O(|Q| + |\Sigma|)$
part in the space complexity of the theorem is due to the
normalization of $T$ as explained at the beginning of this sub section.

\subsection{Implementation}
\label{sec:implementation}

The data structures can be similar to what is classically used in
minimization and bisimulation algorithms like those in
\cite{Hop71,PT87}. However we do prefer some ideas found in \cite{VL08} and
rediscovered in \cite{Cec12}. Instead of a doubly linked list to represent
the states of a block, both papers distribute the indexes of the states in
an auxiliary array $A$ such that states in a same block form a subarray
of $A$. Furthermore, when a block $C$ has to be split in $C_1$ and $C_2$,
the states of $C_1$ and $C_2$ stay in the same subarray of $A$
corresponding to $C$. The only modification is that states of $C_1$ are put
on the left side of that subarray and thus states of $C_2$ on the right side.
Therefore, to encode a block, we just have to memorize a left and a right
index. A key advantage for the present paper, is that to represent an
element $L$ of $S$ we also just have to memorize a left and a right index in
$A$. Therefore, when the blocks of $L$ are split, we do not have to update
these variables. The other elements are classic: each state knows the block
to which it belongs, each block and each element of $S$ maintains its size,
each block maintains a boolean to know whether it belongs to $\cup S$ (this
information is transmitted to the sub block when a block is split),
$S$ may be encoded by a file or a by a stack. To choose a block in $L$ at
line~\ref{bisi:l6} we just choose the left or the right block in the
subarray of $A$ corresponding to $L$. Thus, it is easy to perform the
instruction $L:=L\setminus B$ in line~\ref{bisi:l7} and to scan the
elements of $Smaller$. The transitions are also initially sorted by a
counting sort in order to have the transitions leading to the same state
form a subarray of the array of all the transitions. This allows us to scan the
transitions leading to a given state in time proportional to their number.

The variable $Splitted$ is used just for the
clarity of the presentation. Indeed, when a block $C$ not in one of the
potential splitters of $S$ is split, this is detected during the call of
function $\Split$ and
$C$ is directly added into $S$.

The implementation of function $\Split(X,P)$ is not original (see for
example \cite[Sect. 6]{Cec12} where a version which also returns the blocks in $X$ is
given). The time complexity of a call is $O(|X|)$ when the elements of
$X$ may be scan in time proportional to its size.

In \cite[Sect. 6]{Cec12} an encoding, satisfying the hypotheses, of $alph$ and the $a.\PreSmal$ is
given. However, their role is just 
to distribute the set of
transitions leading to $Smaller$ in function of their label. Then, for each
label found, we scan the corresponding transitions to do the split. The
distribution of the transitions can simply be done by a kind of counting
sort.  It runs in $O(|Smaller|)$-time and uses
$O(|{\rightarrow}|)$-space. The "weak
sorting" technique used in \cite{BC08} can also be used.

In summary, the data structures that we use satisfy the hypotheses given at the
second paragraph of Sect.~\ref{sec:complexities} under the
assumption of a normalized DLTS.

\section{Main Application and Future Works}
\label{sec:conclusion}

The main application of the coarsest bisimulation problem over
finite DLTS is the minimization of deterministic automata (LTS
with the precision of an initial state and a set of final states). It
is well known that, when there is no useless state (a state $q$ is useless
if there is no path from the
initial state to $q$ or if there is no path from $q$ to a final state) this amounts
of finding the coarsest bisimulation 
included in the following equivalence relation:
$\mathscr{R}_\mathrm{init}=\{(q,q')\in Q\times Q\suchthat q\in F \Leftrightarrow q'\in
F\}$ with $F\subseteq Q$ the set of final states. The blocks of this
relation are: $F$, if it is not empty (but in that case the minimal automata
is the empty one), and $(Q\setminus F)$ if it is also not empty. Therefore,
the algorithm presented in the present paper can be used to minimize a
deterministic automata with the complexities announced.

Although this was not the purpose, a good piece of news is that
\texttt{DBisim} is very similar to the algorithm in \cite{PT87}. The
main difference being that our LTS are deterministic. An extension
of the present paper to the coarsest bisimulation problem over LTS will be
done. This will yield an  algorithm with
the same time and space complexities  of \cite{Val09} but 
simpler.

\bibliography{simulation}

\end{document}